\tikzset{snake it/.style={decorate, decoration=snake}}
\newtheorem{theorem}{Theorem}
\newtheorem{proposition}[theorem]{Proposition}
\newtheorem{lemma}[theorem]{Lemma}
\newtheorem*{problem*}{Problem}
\numberwithin{theorem}{section}
\newcommand{\B}{\mathbb{B}}
\newcommand{\RNN}{\mathbb{R}_{\ge 0}}
\theoremstyle{definition}
\newtheorem{example}[theorem]{Example}
\newcommand{\comment}[1]{}
\newcommand{\ACZ}{\textnormal{\textsf{AC$^0$}}\xspace}
\newcommand{\NL}
{\textnormal{\textsf{NL}}\xspace}
\newcommand{\coNL}
{\textnormal{\textsf{coNL}}\xspace}
\newcommand{\LL}{\textnormal{\textsf{L}}\xspace}
\newcommand{\ZZ}{\mathbb{Z}}
\DeclareMathOperator{\modu}{mod}
\title{The complexity of computing \\ the period and the exponent of a digraph}
\author{Stefan Kiefer, Andrew Ryzhikov}
\date{Department of Computer Science, University of Oxford, UK}
\begin{document}

\maketitle

\begin{abstract}
    The period of a strongly connected digraph is the greatest common divisor of the lengths of all its cycles. The period of a digraph is the least common multiple of the periods of its strongly connected components. These notions play an important role in the theory of Markov chains and the analysis of powers of nonnegative matrices. While the time complexity of computing the period is well-understood, little is known about its space complexity. We show that the problem of computing the period of a digraph is \NL-complete, even if all its cycles are contained in the same strongly connected component. However, if the digraph is strongly connected, we show that this problem becomes \LL-complete. For primitive digraphs (that is, strongly connected digraphs of period one), there always exists a number $m$ such that there is a path of length exactly $m$ between every two vertices. We show that computing the smallest such $m$, called the exponent of a digraph, is \NL-complete. The exponent of a primitive digraph is a particular case of the index of convergence of a nonnegative matrix, which we also show to be computable in \NL, and thus \NL-complete.
\end{abstract}

\section{Introduction}\label{sec:intro}
A \emph{nonnegative matrix} is a matrix with nonnegative real entries. An $n \times n$ nonnegative matrix is called \emph{primitive} if there is $k \ge 0$ such that all entries of the $k$th power of the matrix under usual matrix multiplication are strictly positive.
Primitive matrices play an important role in combinatorial matrix theory, the analysis of Markov chains and symbolic dynamics. A classical result in the theory of Markov chains asserts that if the transition matrix of a finite Markov chain is primitive, this Markov chain has a unique stationary distribution~\cite{Seneta2006}. The road colouring theorem, an important result in symbolic dynamics and synchronising automata conjectured in 1970 \cite{Adler1970} and proved in 2009 \cite{Trahtman2009}, states that a digraph with the property that the outdegrees of all vertices are equal admits a colouring into a synchronising deterministic automaton if and only if its adjacency matrix is primitive.

To reason about more general patterns of zero entries in powers of a nonnegative matrix, the following natural homomorphism is useful. Let $\chi(A)$ be the matrix obtained from a nonnegative matrix $A$ by replacing every strictly positive entry with 1. 
Denote by $\RNN$ the set of nonnegative real numbers, and by $\B$ the Boolean semiring $\{0, 1\}$ (with usual addition and multiplication except that $1 + 1 = 1$). The mapping $\chi\colon \RNN^{n \times n} \to \B^{n \times n}$ is then a homomorphism from the monoid of nonnegative matrices to the monoid of matrices over $\B$; that is, $\chi(A \cdot B) = \chi(A) \cdot \chi(B)$ holds for all $A,B \in \RNN^{n \times n}$. Intuitively speaking, $\chi$ agrees with matrix multiplication.

Given a nonnegative matrix $A$, consider the sequence 
\begin{equation*}\label{eq-seq-a}
    \chi(A^1), \chi(A^2), \ldots\,,
\end{equation*}
where $A^k$ is the $k$th power of $A$ under usual matrix multiplication over the reals. Since $\chi$ is a monoid homomorphism, this sequence coincides with the sequence of increasing powers 
\begin{equation*}\label{eq-seq-chia}
    \chi(A)^1, \chi(A)^2, \ldots\,
\end{equation*}
of the element $\chi(A)$ from $\B^{n \times n}$. Since $\B^{n \times n}$ is a finite monoid under multiplication, this sequence is ultimately periodic: there exist natural numbers $m \ge 0$, $p \ge 1$ such that for all $i \ge m$ we have $\chi(A)^{i + p} = \chi(A)^i$. The~smallest such~$p$ is called the \emph{period} of~$A$, and the smallest such $m$ is called \emph{the index of convergence} of~$A$~\cite{Heap1966,Jia1987}.

\begin{example} \label{ex:1}
    Consider the nonnegative matrix
    \[
    A = \begin{pmatrix}
        1 & 1 & 0 \\
        0 & 0 & 1 \\
        0 & 1 & 0
    \end{pmatrix} \ \in \ \RNN^{3 \times 3}\,.
    \]
    We have
    \[
    A^2  =  \begin{pmatrix}
        1 & 1 & 1 \\
        0 & 1 & 0 \\
        0 & 0 & 1        
    \end{pmatrix} \,, \ 
    A^3  =  \begin{pmatrix}
        1 & 2 & 1 \\
        0 & 0 & 1 \\
        0 & 1 & 0        
    \end{pmatrix} \,, \ 
    A^4  =  \begin{pmatrix}
        1 & 2 & 2 \\
        0 & 1 & 0 \\
        0 & 0 & 1        
    \end{pmatrix} \ \in \ \RNN^{3 \times 3}\,.
    \]
    On the other hand,
    \[
    \chi(A)  =  \begin{pmatrix}
        1 & 1 & 0 \\
        0 & 0 & 1 \\
        0 & 1 & 0
    \end{pmatrix}\,,\ 
    \chi(A^2)  =  \begin{pmatrix}
        1 & 1 & 1 \\
        0 & 1 & 0 \\
        0 & 0 & 1        
    \end{pmatrix} \,, \ 
    \chi(A^3)  =  \begin{pmatrix}
        1 & 1 & 1 \\
        0 & 0 & 1 \\
        0 & 1 & 0        
    \end{pmatrix} \,, \ 
    \chi(A^4)  =  \begin{pmatrix}
        1 & 1 & 1 \\
        0 & 1 & 0 \\
        0 & 0 & 1        
    \end{pmatrix} \ \in\ \B^{3 \times 3}\,,
    \]
    where $\chi(A^k) = \chi(A)^k \in \B^{3 \times 3}$ for all $k \ge 0$.
    For all $i \ge 2$ we have $\chi(A)^{i+2} = \chi(A)^i$.
    The period of~$A$ is $2$, and the index of convergence of~$A$ is also~$2$.
\end{example}

It is further useful to introduce the digraph induced by a nonnegative matrix. Given a nonnegative $n \times n$ matrix $A$, let $G(A) = (V, E)$ be the digraph with the vertex set $V = \{1, 2, \ldots, n\}$ and the edge set~$E$ defined as follows: for $1 \le i, j \le n$, we have $(i, j) \in E$ if and only if the entry $(i, j)$ in $A$ is strictly positive. It is then easy to see that $\chi(A)$ is the adjacency matrix of $G(A)$.

A \emph{$(v_1, v_k)$-path} (or simply a \emph{path}) in a digraph $G$ is an alternating sequence 
\[v_1 \xrightarrow{(v_1, v_2)} v_2 \xrightarrow{(v_2, v_3)} \ldots \xrightarrow{(v_{k - 1}, v_k)} v_k\] of vertices and edges of $G$ with $k \ge 2$. The~\emph{length} of this path is $k - 1$. A path is a \emph{cycle} if $v_1 = v_k$. 
A digraph~$G$ is called \emph{strongly connected} if for every pair $u, v$ of its vertices there are a $(u, v)$-path and a $(v, u)$-path in~$G$.
A maximal (by inclusion) set $S$ of vertices such that for every two vertices $u, v \in S$ there are a $(u, v)$-path and a $(v, u)$-path is called a  \emph{strongly connected component}. Note that in our definition, vertices that do not belong to any cycles are not in any strongly connected component. We transfer the notions of period, index of convergence and primitivity from a matrix~$A$ to its digraph~$G(A)$. Moreover, powering matrices over $\B$ has a natural interpretation in terms of digraphs: for any $k \ge 1$, the entry $(i, j)$ in $\chi(A)^k$ is equal to one if and only if there is an $(i, j)$-path of length~$k$ in $G$.

\begin{example} \label{ex:2}
Let $A \in \RNN^{3 \times 3}$ be the matrix from \Cref{ex:1}.
The digraph $G(A)$ can be visualized as follows.
\begin{center}
\begin{tikzpicture} [node distance = 2cm]
\tikzset{every state/.style={inner sep=1pt,minimum size=1.5em}}

\node [state] at (0, 0) (1) {$1$};
\node [state] at (2, 0) (2) {$2$};
\node [state] at (4, 0) (3) {$3$};

\path [-stealth, thick]
(1) edge [loop left] (1)
(1) edge (2)
(2) edge[bend left=20] (3)
(3) edge[bend left=20] (2)
;
\end{tikzpicture}
\end{center}
Its strongly connected components are $\{1\}$ and $\{2,3\}$.
The top-right entry of $\chi(A^4) = \chi(A)^4$ is the Boolean value~$1$, which indicates that $G(A)$ has at least one path of length~$4$ from $1$ to~$3$. (In fact, there are two such paths in~$G(A)$.)
\end{example}

It is easy to show that a digraph is primitive if and only if it is strongly connected and has period one, see e.g.~\cite[Theorem~3.3]{Minc1988}.
Moreover, if a digraph is strongly connected, then its period is the greatest common divisor of the lengths of all its cycles. If it is not strongly connected, then its period is the least common multiple of the periods of its strongly connected components (we assume that the least common multiple of an empty set is equal to one). Proofs of these statements can be found in \cite{Rosenblatt1957,Kim1982}. In particular, if a digraph is primitive, then there exists $m \ge 1$ such that there is a path of length exactly $m$ between every two vertices. The index of convergence of a primitive digraph is often called its \emph{exponent of primitivity}, or just \emph{exponent}.

\paragraph*{Remarks on computational complexity notions.} We assume that the reader is familiar with basic complexity theory; see, e.g., \cite{Sipser2013}. Since in this paper we discuss the complexity of function problems, the following remark is required.
Slightly abusing notation, we say that a function is computable in \LL (respectively, in \NL) if there exists a deterministic (respectively, nondeterministic) logarithmic-space Turing machine computing this function. Namely, given an input, the machine provides the output on the write-only tape, using only a logarithmic amount of working space in the process. Moreover, our definition of a nondeterministic machine computing a function requires that for each input there is always at least one accepting path, and all accepting paths provide the same output.
Similarly, we speak about \LL- and \NL-algorithms, and say that a function problem is \LL- (respectively, \NL-) complete whenever it is computable in \LL (respectively, in \NL) and the decision problem of checking if the value of the function is at most a given value is \LL-hard (respectively, \NL-hard).
In proofs we will make use of the well-known fact that the composition of two functions computable in \LL (respectively, in \NL) is computable in \LL (respectively, in \NL).
We will also freely use the fact that $\NL = \coNL$ \cite[Section~8.6]{Sipser2013}. In particular, we use the fact that if a property can be decided in \NL, it can be used as a ``subprocedure'' in a nondeterministic logarithmic space machine as follows: guess if the property holds or not, and then verify this fact in~$\NL=\coNL$. 

\paragraph*{Existing results and our contributions.}

There exist several linear-time algorithms for computing the period of a digraph \cite{Balcer1973, Arkin1991,Atallah1982}. For example, the elegant approach of \cite{Balcer1973} for strongly connected digraphs works, roughly speaking, as follows: as long as there is a vertex of outdegree more than one, merge all its successors. This results in a single cycle whose length is equal to the period of the original digraph. This algorithm works in linear space. Somewhat surprisingly, the literature does not seem to have any results concerning the space complexity of computing the period beyond this linear-space upper bound.

Deciding if a digraph is primitive is \NL-hard by a simple reduction from the problem of deciding if a digraph is strongly connected, which is \NL-complete \cite[Problem~8.27]{Sipser2013}: it is enough to add a self-loop edge to every vertex of the input digraph. Then the period of the resulting digraph is one, and hence it is primitive if and only if the original digraph is strongly connected. This leaves open the possibility that computing the period without testing for strong connectivity might be easier. Therefore, our two main results concern the complexity of computing the period for digraphs under restrictions on their strongly connected components: we show that computing the period is \NL-complete even for digraphs with a single non-trivial strongly connected component (\Cref{thm:nlc-one-mat}), but that, in contrast, computing the period of strongly connected digraphs is \LL-complete (\Cref{thm-sc-period}). The latter result
involves a new relation between the period of a strongly connected digraph and reachability properties of undirected graphs (\Cref{lemma-reachability}). We complement these theorems with an \NL-completeness result on the exponent of primitivity (\Cref{thm-index-of-convergence}).

\section{The period of a general digraph}\label{sec:general}
We first show that computing the period of a digraph is \NL-complete, even when it is very close to being strongly connected. A strongly connected component of a digraph is called a \emph{sink} if there is no path from a vertex belonging to it to a vertex not belonging to it.
We call a digraph \emph{almost strongly connected} if it has a unique strongly connected component, and that strongly connected component is a sink.  For example, the digraph in \Cref{fig:st-reachability} (right) is almost strongly connected but not strongly connected, as $s'$ and five other vertices are not in the strongly connected component. In this section we show the following result.

\begin{theorem}\label{thm:nlc-one-mat}
    The period of a digraph is computable in \NL. Deciding if the period of a digraph is one is \NL-complete, even if its period is at most two and it is almost strongly connected.
\end{theorem}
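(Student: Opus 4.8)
The plan is to treat the two halves of the statement separately: membership in \NL for computing the period (which immediately puts ``period one'' in \NL), and \NL-hardness of deciding period one under the stated restrictions.

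For the upper bound I would first reduce the general case to a single strongly connected component via an algebraic characterization of its period. Fix a root $r$ in a component $C$ and let $d(u)$ be the length of a shortest path from $r$ to $u$ inside $C$. Then the period of $C$ equals
\[ \gcd\{\, d(u) + 1 - d(v) : (u,v) \text{ is an edge of } C \,\}. \]
Indeed, summing $d(u)+1-d(v)$ around any cycle telescopes to its length, so this gcd divides every cycle length and hence divides the period; conversely, using the period-$p$ partition $C_0,\dots,C_{p-1}$ with all edges going from $C_i$ to $C_{i+1}$, every walk from $r$ to $u$ has length congruent to the class index of $u$ modulo $p$, which forces $p$ to divide each term, hence the period divides the gcd, so the two coincide. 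Since shortest-path distances are \NL-computable and a running $\gcd$ of numbers bounded by $n$ fits in logarithmic space, the period of each component is computable in \NL. I would enumerate components by canonical representatives (two vertices share a component iff mutually reachable, an \NL test), compute the period of each component containing an edge, and output the least common multiple.

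The one subtlety here is the size of the output: the overall period is the lcm of the component periods, which may be exponential in $n$. But each component period is at most $n$, so this is an lcm of polynomially many numbers each at most $n$; computing it (equivalently $\prod_q q^{\max_i v_q(a_i)}$ over primes $q \le n$) reduces to iterated multiplication, which lies in \LL and hence in \NL. As the composition of \NL-computable functions is \NL-computable, the whole procedure runs in \NL, and checking whether the result equals one is then immediate.

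For \NL-hardness I would reduce from $st$-reachability, aiming for a digraph whose period is $2$ when $t$ is unreachable from $s$ and $1$ otherwise, while being almost strongly connected. The idea is to force all ``default'' cycles to be even and let a single odd cycle appear exactly when $s$ reaches $t$. First subdivide every edge of the input (a DAG with $s$ before $t$) so that all $s$-$t$ paths have even length, then pass to the bipartite double cover, in which every edge flips a colour bit in $\{0,1\}$; every cycle built from such edges is then even. I would add a hub pair $h_0,h_1$ with the $2$-cycle $h_0 \to h_1 \to h_0$ (a guaranteed even cycle pinning the gcd of even cycle lengths to exactly $2$), flip-edges letting the hub reach $s$ and letting every vertex drain back to the hub, and exactly one colour-preserving edge $t_0 \to h_0$. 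A closed walk has odd length iff it uses an odd number of colour-preserving edges, so an odd cycle exists iff this single edge lies on a cycle, i.e.\ iff $t_0$ is reachable from the hub, i.e.\ iff $s$ reaches $t$.

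The main obstacle, requiring the most care, is verifying all three structural requirements simultaneously. Draining every vertex to the hub makes the hub's component the unique sink component, and since the double cover of the (subdivided) input is acyclic, every cycle must pass through the hub and hence lies in that component, giving almost strong connectivity. The guaranteed $2$-cycle forces the period to be $2$ whenever only even cycles are present, while any odd cycle yields $\gcd(2,\text{odd}) = 1$; and because $t_0 \to h_0$ is the only colour-preserving edge, odd cycles exist only in the reachable case, so the period is exactly $1$ or $2$. As all edges are produced by a local, logarithmic-space transformation, this reduces $st$-reachability to deciding whether the period is one, establishing \NL-hardness even on almost strongly connected digraphs of period at most two.
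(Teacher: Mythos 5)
Your proposal is correct, and both halves take a genuinely different route from the paper. For the upper bound, the paper computes the period of a strongly connected component as the least $p$ such that $A^{n^2} = A^{n^2+p}$, relying on Schwarz's $n^2$ bound on the index of convergence, and checks this in \NL by guessing $(u,v)$-paths of lengths $n^2$ and $n^2+p$; you instead use the classical characterisation $p = \gcd\{d(u)+1-d(v) : (u,v) \in E(C)\}$ with $d$ a shortest-path function rooted in the component, which is more elementary and avoids the index-of-convergence bound altogether (your telescoping/consistent-partition argument for this identity is sound, and exact distances are \NL-verifiable via $\NL = \coNL$). Note, however, that your characterisation does not subsume the paper's \LL-algorithm for strongly connected digraphs (\Cref{thm-sc-period}), since exact directed distances are not known to be computable in \LL; the paper needs the separate reachability lemma (\Cref{lemma-reachability}) for that. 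Your lcm step (prime decomposition of unary inputs, then iterated multiplication in \LL) is essentially identical to the paper's. For the hardness, the core idea coincides --- subdivide edges so all inherited path lengths are even, plant a guaranteed $2$-cycle, and arrange a single parity-breaking edge that lies on a cycle iff $s$ reaches $t$ --- but your implementation via a bipartite double cover plus a hub with drain edges from every vertex is heavier machinery than the paper's, which simply adds the three edges $(s,t)$, $(t',t)$, $(t,s)$ to the subdivided DAG; in exchange, your drain edges buy almost strong connectivity directly, without the normalisation of \Cref{lem:st-reach-ass} (the unique extra source $s'$ and sink $t'$), which the paper needs so that every vertex can reach the sink component. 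One detail you should pin down: the hub's entry edge into the cover must go to the copy $s_0$ (i.e.\ $h_1 \to s_0$), so that the even-length $s$--$t$ paths of the subdivided graph land on $t_0$, where your unique colour-preserving edge sits; with the opposite choice the odd cycle would never form even when $t$ is reachable. Also, the double cover largely re-encodes parity information that the subdivision already provides --- the paper extracts the same parity dichotomy from acyclicity of $(V', E' \setminus \{(t,s)\})$ alone --- but this redundancy costs only a constant factor and is not an error.
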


\subsection{\NL-algorithm} 

First we prove the first statement of \Cref{thm:nlc-one-mat}, i.e., that the period is computable in \NL.
Let $G$ be a digraph with $n$ vertices.
As proved in \cite{Schwarz1970} (see also \cite{Jia1987,Heap1966}), the index of convergence of~$G$ is at most $n^2$.
Thus, the period of $G$  is the smallest $p$ such that $A^{n^2} = A^{n^2 + p}$, where $A \in \B^{n \times n}$ is the adjacency matrix of~$G$.

Suppose first that $G$ is strongly connected.
Then its period is at most~$n$, and the above characterisation of the period yields an \NL-algorithm computing it:
check for all $p$ from $1$ to~$n$ whether there exists a pair of vertices $u,v$ such that there is a $(u, v)$-path of length~$n^2$ but not of length $n^2+p$, or vice versa. The period of $G$ is the smallest~$p$ for which no such pair of vertices exists.
This can be done in~\NL by guessing $(u,v)$-paths of the required lengths in~$G$.
We will improve this to an \LL-algorithm in the next section.

For a general (not necessarily strongly connected) digraph~$G$, the period equals the least common multiple of the periods of the strongly connected components \cite{Rosenblatt1957}.
Therefore, it can be exponential in $n$, and thus has to be represented in binary.
The list of the strongly connected components of~$G$ can be computed in~\NL (without storing it on the work tape).
Since \NL-computable functions can be composed, by using the algorithm from the previous paragraph, we can also compute in~\NL the list of their periods, say $n_1, \ldots, n_k$, in their unary representation.
Using the compositionality of \NL-computable functions again,
to show that the overall period can be computed in~\NL, it suffices to show that the \emph{lcm problem} can be computed in~\NL, where the lcm problem is to compute, in binary representation, the least common multiple of a given list of numbers represented in unary.
Hence, our remaining task is to show that the lcm problem can be computed in~\NL.
In fact, we show that it can be computed in~\LL.
The fact that the input to the lcm problem is in unary simplifies our task.

Denote by $n_1, \ldots, n_k$ the input (in unary) to the lcm problem and by $\ell$ the output, i.e., the least common multiple of $n_1, \ldots, n_k$.
We proceed in two steps, again using the compositionality of functions computable in \LL.
In the first step, we compute in~\LL the (unique) sorted list of prime numbers $p_1, \ldots, p_k$ (in binary) 
and positive integers $m_1, \ldots, m_k$ (in unary) with $p_1 < p_2 < \ldots < p_k$ and $\ell = \prod_{j=1}^k p_j^{m_j}$.
This can be done naively.
Specifically, we count up a binary counter~$c$ from one to the largest value among
the numbers $n_i$, and for every value of $c$ we perform the following process. We first check if $c$ is prime, by dividing $c$ by the numbers from $2$ to $c-1$. If it is, we check if $c$ divides $n_i$ for some~$i$. If the answer to either of these two checks is no, we proceed to the next value of $c$.
If both answers are yes, the current value of $c$ is the next prime $p_j$ from the list, and we compute its multiplicity~$m_j$ by incrementing $m_j$ as long as $p_j^{m_j}$ divides at least one of the numbers~$n_i$. 
Since we only need a constant number of numbers on the work tape at the same time and these numbers are encoded in binary, the space needed for the described arithmetic operations is logarithmic in $n_1 + \cdots + n_k$, i.e., logarithmic in the input.

In the second and final step, we need to compute a binary representation of $\ell = \prod_{j=1}^k p_j^{m_j}$, where the numbers $p_j, m_j$ are given in binary and unary, respectively.
To do that, we make a new list consisting, for each $j$, of  $m_j$ copies of~$p_j$. We then compute the binary representation of the product of all numbers from the list: this is known as iterated multiplication and is computable in \LL \cite[Theorems 1.1, 1.2, 1.5]{ChiuDL01}.

\subsection{\NL-hardness}

We now show \NL-hardness stated in \Cref{thm:nlc-one-mat}. Given a digraph $G = (V, E)$ and two its vertices $s$ and~$t$, the $(s,t)$-reachability problem asks if there exists an $(s, t)$-path in $G$.
The $(s,t)$-reachability problem is well known to be \NL-complete \cite[Theorem~8.25]{Sipser2013}.
We will make some assumptions on the digraph, which are justified by the following folklore lemma.

\begin{lemma} \label{lem:st-reach-ass}
The $(s,t)$-reachability problem in digraphs is \NL-hard even under the following assumptions:
the digraph is acyclic, $s$ has no incoming edges, $t$ has no outgoing edges, there is only one vertex $s' \ne s$ without incoming edges,  and there is only one vertex~$t' \ne t$ without outgoing edges.
\end{lemma}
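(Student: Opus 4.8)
The plan is to give a logarithmic-space many-one reduction from the unrestricted $(s,t)$-reachability problem, which is \NL-complete, to the restricted version. From any digraph $G=(V,E)$ with $|V|=n$ and designated vertices $s,t$, I would produce a digraph $G'$ satisfying all five assumptions in which $s$ reaches $t$ if and only if $s$ reaches $t$ in $G$.

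First I would remove cycles by passing to the standard layered (time-expanded) graph. Since any $(s,t)$-path may be assumed simple, it has length at most $n-1$; so I take $n$ copies $(v,i)$ of each vertex, with $i \in \{0,\dots,n-1\}$, and for each edge $(u,v)\in E$ and each $i$ add the forward edge $(u,i)\to(v,i+1)$, together with ``waiting'' edges $(v,i)\to(v,i+1)$. All edges go from layer $i$ to layer $i+1$, so the result is acyclic. Setting the new source to $(s,0)$ and the new target to $(t,n-1)$, a padding argument shows that $s$ reaches $t$ in $G$ iff $(s,0)$ reaches $(t,n-1)$ in the layered graph: a path of length $k\le n-1$ is padded by $n-1-k$ waiting steps to reach $(t,n-1)$, and conversely any layered path projects (by dropping waiting steps) to an $s$--$t$ walk in $G$.

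The remaining issue is that the layered graph has many vertices without incoming edges (all of layer $0$) and many without outgoing edges (all of layer $n-1$), whereas the lemma permits only one of each besides $s$ and $t$. I would fix both at once: introduce a new vertex $s'$ with an edge $s'\to(v,0)$ for every $v\neq s$, and a new vertex $t'$ with an edge $(v,n-1)\to t'$ for every $v\neq t$. Then every layer-$0$ vertex except $(s,0)$ gets an incoming edge from $s'$, every layer-$(n-1)$ vertex except $(t,n-1)$ gets an outgoing edge to $t'$, and every $(v,i)$ with $i\ge 1$ (resp. $i\le n-2$) has an incoming (resp. outgoing) waiting edge. Hence the only vertices without incoming edges are $s=(s,0)$ and $s'$, and the only ones without outgoing edges are $t=(t,n-1)$ and $t'$. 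The graph stays acyclic under the topological order $s',\text{ layer }0,\dots,\text{ layer }n-1,\,t'$, and the whole construction is computable in logarithmic space using counters for the layer and vertex indices.

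The step that needs the most care, and the main obstacle, is verifying that adjoining $s'$ and $t'$ does not alter $(s,t)$-reachability. Because $s'$ has no incoming edges, it is unreachable from $s$, so its out-edges create no new paths out of $s$; and because $t'$ is a sink distinct from $t$, the new edges into it can never be used to reach $t$. Combined with the layering equivalence, this yields that $s$ reaches $t$ in $G$ iff $s$ reaches $t$ in $G'$, so $G'$ is a valid instance of the restricted problem and the reduction is complete.
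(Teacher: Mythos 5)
Your proposal is correct and takes essentially the same approach as the paper: the identical layered (time-expanded) acyclic construction with waiting edges, followed by a fix for the superfluous sources and sinks. The only, cosmetic, difference is in that fix --- the paper merges the extra source-less and sink-less vertices into single vertices $s'$ and $t'$, whereas you adjoin fresh vertices $s'$ and $t'$ with covering edges, and you verify that this preserves $(s,t)$-reachability more explicitly than the paper does.
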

\begin{proof}
We provide an \ACZ reduction from the $(s,t)$-reachability problem to the restricted version from the statement.
We make the following modifications in turn, none of which affects the existence of a path from $s$ to~$t$.

The digraph can be made acyclic using the following standard transformation: create a new digraph whose state set consists of $|V|$ copies of $V$. For each edge $(u, v) \in E$ and each $1 \le i \le |V|-1$, create an edge going from the $i$th copy of $u$ to the $(i + 1)$st copy of $v$ in the new digraph, and also an edge going from the $i$th copy of $u$ to the $(i + 1)$st copy of $u$.
Since a shortest $(s, t)$-path in $G$, if it exists, has length at most $|V| - 1$, we get that it exists if and only if there is a path from the first copy of $s$ to the $|V|$th copy of $t$ in the new digraph.
For simplicity, call the $1$st copy of~$s$ and the $|V|$th copy of~$t$ again $s$ and $t$, respectively.

By construction, there are now no edges incoming to~$s$ and no edges outgoing from~$t$.
Finally, we merge vertices in such a way that there is only one vertex $s' \ne s$ without incoming edges and only one vertex $t' \ne t$ without outgoing edges.
\end{proof}

We now proceed to the proof of \NL-hardness stated in \Cref{thm:nlc-one-mat}.
    We construct a digraph $G' = (V', E')$ by modifying $G$ as follows. First, subdivide every edge of~$G$ into a directed path of length two by adding a new vertex for every edge. Denote by $V'$ the set of vertices in the obtained digraph, and by~$E'$ the set of thus obtained edges together with the edges $(s, t), (t', t), (t, s)$. See \Cref{fig:st-reachability} for an~example.

\begin{figure}[ht]\centering
\begin{subfigure}[c]{0.45\textwidth}
\begin{tikzpicture} [node distance = 2cm]
\tikzset{every state/.style={inner sep=1pt,minimum size=1.5em}}

\node [state] at (0, 0) (s) {$s$};
\node [state] at (2, 0) (p2) {};

\node [state] at (4, 0) (p4) {$t'$};
\node [state] at (2, -2) (q2) {$s'$};
\node [state] at (4, -2) (r2) {};

\node [state] at (6, 0) (t) {$t$};
\node [state] at (4, 2) (z1) {};

\path [-stealth, thick]

(s) edge [] node[above] {} (p2)
(p2) edge [] node[above] {} (p4)

(q2) edge [] node[above] {} (p2)
(q2) edge [] node[above] {} (r2)
(r2) edge [] node[above] {} (t)
(r2) edge [] node[above] {} (p4)

(p2) edge [] node[above] {} (z1)
(z1) edge [] node[above] {} (t)

;
\end{tikzpicture}
\end{subfigure}
\hspace{0.2cm}
\begin{subfigure}[c]{0.45\textwidth}
\begin{tikzpicture} [node distance = 2cm]
\tikzset{every state/.style={inner sep=1pt,minimum size=1.5em}}

\clip  (-0.3, -3.15) rectangle (6.3,3.15);

\node [state] at (0, 0) (s) {$s$};
\node [state, dotted] at (1, 0) (p1) {};
\node [state] at (2, 0) (p2) {};
\node [state, dotted] at (3, 0) (p3) {};
\node [state] at (4, 0) (p4) {$t'$};
\node [state, dotted] at (2, -1) (q1) {};
\node [state] at (2, -2) (q2) {$s'$};
\node [state, dotted] at (3, -2) (r1) {};
\node [state] at (4, -2) (r2) {};
\node [state, dotted] at (5, -1) (g1) {};
\node [state, dotted] at (4, -1) (k1) {};

\node [state] at (6, 0) (t) {$t$};
\node [state, dotted] at (3, 1) (z0) {};
\node [state] at (4, 2) (z1) {};
\node [state, dotted] at (5, 1) (z2) {};

\path [-stealth, thick]

(s) edge [] node[above] {} (p1)
(p1) edge [] node[above] {} (p2)
(p2) edge [] node[above] {} (p3)
(p3) edge [] node[above] {} (p4)

(q1) edge [] node[above] {} (p2)
(q2) edge [] node[above] {} (q1)
(q2) edge [] node[above] {} (r1)
(r1) edge [] node[above] {} (r2)
(r2) edge [] node[above] {} (g1)
(r2) edge [] node[above] {} (k1)

(g1) edge [] node[above] {} (t)
(k1) edge [] node[above] {} (p4)

(p2) edge [] node[above] {} (z0)
(z0) edge [] node[above] {} (z1)
(z1) edge [] node[above] {} (z2)
(z2) edge [] node[above] {} (t)

(p4) edge [] node[above] {} (t)

;

\draw [-stealth, thick]
(t) ..  controls  ($(t) + (-1, 4)$) and ($(s) + (1, 4)$)  .. node[above] {} (s);
\draw [-stealth, thick]
(s) ..  controls  ($(s) + (1, -4)$) and ($(t) + (-1, -4)$)  .. node[above] {} (t);
\end{tikzpicture}
\end{subfigure}

\caption{The original digraph $G$ (left) and the digraph $G'$ obtained in the reduction in the proof of \Cref{thm:nlc-one-mat} (right). Dotted vertices are obtained by subdividing each edge into two.}\label{fig:st-reachability}
\end{figure}
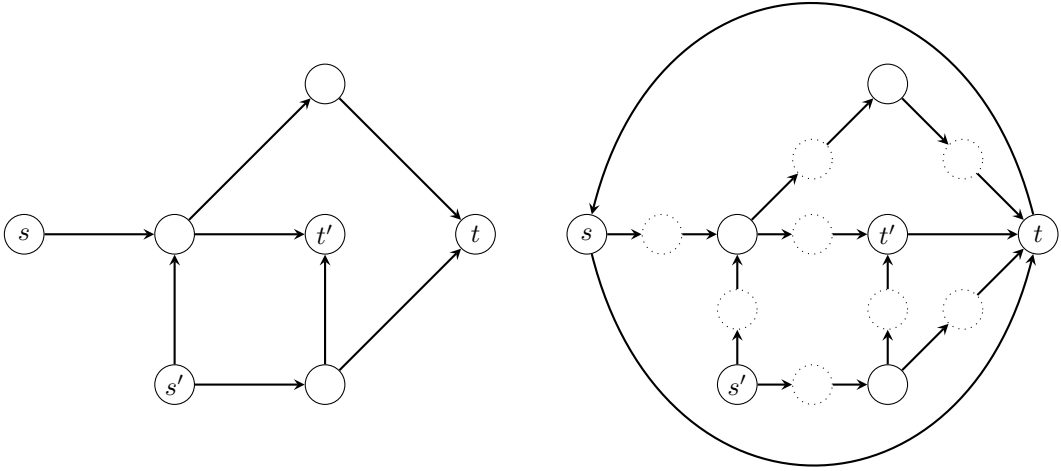

    Observe that by construction the digraph $(V', E' \setminus \{(t, s)\})$ is acyclic. Therefore, since $G'$ differs from it by only one edge $(t, s)$, every cycle in $G'$ must contain $(t, s)$. Furthermore, $(V', E' \setminus \{(t, s)\})$ contains an $(s, t)$-path of even length if and only if there is an $(s, t)$-path in $G$. Hence, there is a cycle of odd length in $G'$ if and only if there is an $(s, t)$-path in $G$. Thus, since $G'$ contains a cycle consisting of two edges $(s, t)$ and $(t, s)$, the period of $G'$ is one if and only if there is an $(s, t)$-path in $G$, otherwise it is two.

    To conclude the proof, it remains to note that by construction every cycle in $G'$ is contained in the unique sink strongly connected component (which contains $s$ and $t$), so $G'$ is almost strongly connected.

\section{The period of a strongly connected digraph}\label{sec:sc}
We now show that if a digraph is strongly connected, computing the period becomes easier, namely \LL-complete. The proof of the following theorem is provided in the next two subsections.

\begin{theorem}\label{thm-sc-period}
    The period of a strongly connected digraph is computable in \LL. Deciding if a strongly connected digraph has period one is \LL-complete, even for digraphs of period at most two.
\end{theorem}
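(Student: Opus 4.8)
The plan is to prove the two statements separately, with the upper bound resting on a new characterisation of the period in terms of undirected connectivity (our \Cref{lemma-reachability}) together with Reingold's theorem that undirected reachability lies in \LL.

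For the upper bound I would first establish \Cref{lemma-reachability}: for a strongly connected digraph $G=(V,E)$, a fixed root $r\in V$, and an integer $d\ge 1$, build the undirected graph $U_d(G)$ on vertex set $V\times\ZZ/d\ZZ$ that has, for every arc $(u,v)\in E$ and every $i$, an edge between $(u,i)$ and $(v,i+1\bmod d)$. The claim is that $d$ divides the period $p$ of $G$ if and only if $(r,0)$ lies in a different connected component of $U_d(G)$ from each of $(r,1),\dots,(r,d-1)$. The key to the proof is that the quantity ``level minus second coordinate'' is invariant along the edges of $U_d(G)$: in any $d$-partition witnessing $d\mid p$, that is, an assignment $\ell\colon V\to\ZZ/d\ZZ$ with $\ell(v)=\ell(u)+1$ for every arc $(u,v)$, each edge of $U_d(G)$ joins two vertices with equal value of $\ell(\cdot)-(\text{coordinate})$, so $(r,0)$ and $(r,j)$ cannot be connected for $j\neq 0$. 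Conversely, if no such connection exists, then strong connectivity forces every $v$ to occur with a unique coordinate in the component of $(r,0)$, and reading off that coordinate yields a valid level function, whence $d\mid p$.

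Given \Cref{lemma-reachability}, the upper bound is immediate. Since a strongly connected digraph has period at most $n$, its period is the largest $d\in\{1,\dots,n\}$ for which the disconnection test above succeeds. Each test is an instance of undirected reachability on the graph $U_d(G)$, which has at most $n^2$ vertices and can be generated on the fly from $G$ in logarithmic space; by Reingold's theorem undirected reachability is in \LL, and composing this with the outer loop over $d$ (which stores only the current $d$ and the best value so far) keeps the whole computation in \LL.

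For \LL-hardness I would reduce from a connectivity/parity problem that is \LL-complete under \ACZ reductions. The guiding principle is that a strongly connected digraph has period $1$ exactly when it contains a cycle of odd length, while bidirecting an undirected graph (replacing each edge by the two opposite arcs) creates $2$-cycles and hence caps the period at $2$; thus the bidirection of a connected undirected graph $H$ is strongly connected and has period $1$ if $H$ is non-bipartite and period $2$ if $H$ is bipartite. Reducing from the (\LL-complete) problem of deciding non-bipartiteness of a connected graph then gives the theorem, with the output always of period at most two. The main obstacle — the step I expect to require the most care — is producing, by an \ACZ reduction, an output that is genuinely strongly connected: the naive odd-cycle gadget (subdivide $H$ to destroy its own odd cycles and then add a single edge $\{s,t\}$) creates the desired odd cycle precisely when $s$ and $t$ are connected, but any extra edges inserted to enforce strong connectivity also connect $s$ to $t$ and thereby short-circuit the gadget, making the output non-bipartite regardless of the instance. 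I would sidestep this by reducing from connected non-bipartiteness directly, so that strong connectivity of the bidirected graph is automatic and no connectivity gadget is needed; the remaining content is the routine verification that bidirection preserves the parity of cycle lengths and that the reduction is computable in \ACZ.
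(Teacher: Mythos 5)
Your upper bound is essentially the paper's own proof: your graph $U_d(G)$ and the divisibility criterion are exactly the paper's \Cref{lemma-reachability}, argued the same way (the ``level minus coordinate'' invariant for one direction; translation of paths plus strong connectivity to extract a $d$-consistent partition for the other), followed by the same loop over $d\in\{1,\dots,n\}$ using Reingold's theorem \cite{Reingold2008}. That half is correct.

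The hardness half has a genuine gap. You rest everything on the claim that non-bipartiteness of a \emph{connected} undirected graph is \LL-complete under \ACZ reductions, and you never prove it; it is not the standard statement. The standard \LL-hardness proof for non-bipartiteness reduces from undirected $(s,t)$-reachability by subdividing every edge and adding the edge $\{s,t\}$, and its output is disconnected precisely in the interesting case. As your own discussion shows, this is not an incidental defect: any \ACZ-computable device that restores connectivity supplies an $s$--$t$ connection of some fixed parity, and together with the odd gadget edge this closes an odd cycle irrespective of the instance. So hardness of the connected promise version does not follow from hardness of the general version, and ``reducing from connected non-bipartiteness directly'' does not sidestep the obstacle you identified --- it renames it. The entire difficulty of the theorem is concentrated in that one unproved assertion.

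The assertion is in fact true, but establishing it requires changing the source problem, which is what the paper does: it reduces from \textsc{ORD} \cite{Etessami1997}, where the input is \emph{promised} to be a single directed path given by an unordered successor relation, so connectivity (indeed strong connectivity, after closing the path into a cycle) comes from the promise rather than from any added gadget. Parity is then encoded by subdividing all edges (making every segment even), inserting one extra vertex $s'$ on the edge entering $s$ (a single odd-length detour), adding the closing edge $(v',v)$ and chords $(v,s)$, $(v,t)$, $(v,v')$; the cycle through $(v,t)$ picks up the odd detour through $s'$ exactly when $t$ precedes $s$, making the period $1$ or $2$ accordingly (\Cref{lemma-sc-period-hard}). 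If you want to keep your bidirection framework, you would need to carry out an \textsc{ORD}-based construction of this kind to certify \LL-hardness of connected non-bipartiteness --- at which point you will have reproved the paper's lemma in undirected clothing.
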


\subsection{\LL-algorithm} 

We first show how to compute the period in \LL. We will need the following classical result. Let $G = (V, E)$ be a digraph, and let $S_1, \ldots, S_t$ be a partition of its state set $V$ into~$t$ pairwise disjoint subsets. 
We call such partition \emph{$t$-consistent} if for every edge $(u, v) \in E$, it holds true that if $u \in S_i$ then $v \in S_{i + 1~(\modu t)}$.

\begin{proposition}[see e.g.\ Theorem 10.5.1 in \cite{BangJensen2008}]\label{prop-partition}
    Let $p$ be the period of a strongly connected digraph $G = (V, E)$. Then there exists a $p$-consistent partition of its vertex set $V$.
\end{proposition}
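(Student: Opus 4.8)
The plan is to build the partition from a natural labelling of the vertices by residues modulo~$p$. Recall (from the facts stated in the introduction) that the period~$p$ of a strongly connected digraph equals the greatest common divisor of the lengths of all its cycles. I would fix an arbitrary root vertex~$r$ and, for every vertex~$u$, define $\phi(u) \in \{0, 1, \ldots, p-1\}$ to be the length modulo~$p$ of \emph{some} $(r,u)$-path, which exists because $G$ is strongly connected. The candidate partition is then $S_i = \{u \in V : \phi(u) = i\}$ for $0 \le i \le p-1$ (with indices read modulo~$p$). Granting that $\phi$ is well defined, $p$-consistency is immediate: if $(u,v) \in E$ and $\phi(u) = i$, then appending the edge $(u,v)$ to an $(r,u)$-path of length $\equiv i \pmod p$ yields an $(r,v)$-path of length $\equiv i+1 \pmod p$, so $v \in S_{i+1 \, (\modu p)}$.

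The crux, and the step I expect to be the main obstacle, is showing that $\phi$ is well defined, i.e.\ that any two $(r,u)$-paths have the same length modulo~$p$. For this I would first establish the auxiliary claim that \emph{every closed walk in $G$ has length divisible by~$p$}. The idea is that a closed walk decomposes into simple cycles: one repeatedly locates a repeated vertex, splits off the cycle between its two occurrences, and inducts on the length of the remaining closed walk. Since $p$ divides the length of every cycle (being the gcd of all cycle lengths), it divides the sum, and hence the length of the whole closed walk. Making the decomposition argument precise is the substantive part of the proof.

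Given the claim, well-definedness follows by a cancellation argument. If $W_1$ and $W_2$ are two $(r,u)$-paths of lengths $\ell_1$ and $\ell_2$, pick any $(u,r)$-walk~$B$ of length~$m$, which exists by strong connectivity. Then $W_1 B$ and $W_2 B$ are closed walks based at~$r$, so the claim gives $\ell_1 + m \equiv 0 \equiv \ell_2 + m \pmod p$, whence $\ell_1 \equiv \ell_2 \pmod p$. Thus $\phi(u)$ is independent of the chosen path, which is exactly what is needed for both the consistency check above and the coherence of the labelling.

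Finally, to confirm that we genuinely obtain a partition into $p$ parts, I would check that each $S_i$ is nonempty. Since $G$ is strongly connected and contains a cycle, it admits arbitrarily long walks from~$r$; taking a walk $r = w_0, w_1, \ldots, w_L$ with $L \ge p-1$, one has $\phi(w_j) = j \, (\modu p)$ along it, so the values $\phi(w_0), \ldots, \phi(w_{p-1})$ already exhaust $\{0, \ldots, p-1\}$ and place a vertex in every class. This completes the construction of the desired $p$-consistent partition.
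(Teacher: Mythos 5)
Your proof is correct. Note that the paper itself offers no proof of this proposition: it is stated as a known result with a pointer to Theorem 10.5.1 of Bang-Jensen and Gutin, so there is no in-paper argument to compare against. Your argument is precisely the standard textbook proof of that theorem: label each vertex $u$ by the length modulo~$p$ of some $(r,u)$-path from a fixed root~$r$, prove well-definedness by showing $p$ divides the length of every closed walk, and read off $p$-consistency edge by edge. All three steps are sound, including the cancellation trick with a return walk $B$ and the nonemptiness check via a walk of length at least $p-1$. One small observation that would let you shorten the argument: under this paper's definitions a ``path'' is an arbitrary alternating sequence of vertices and edges (repetitions allowed), so a ``cycle'' is already any closed walk; since $p$ is the gcd of the lengths of all cycles in this sense, your auxiliary claim that $p$ divides the length of every closed walk is immediate, and the decomposition into simple cycles --- the step you flag as the substantive part --- is not actually needed here (though it is the right move under the more common convention where cycles are required to be simple). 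The only other caveat is the degenerate case of a single vertex with no self-loop, where the period is not defined and the statement is vacuous; this does not affect your proof.
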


Let $G = (V, E)$ be a strongly connected digraph. Given a natural number $k \ge 1$, consider the undirected graph $G'_k = (V', E')$ defined as follows. We take $V' = V \times \ZZ_k$, where $\ZZ_k$ is the set of residue classes modulo $k$. There is an edge between vertices $(u, i)$ and $(v, j)$ in $G'_k$ if and only if $(u, v) \in G$ and $j = i + 1~(\modu k)$. We are going to show the following relation between reachability properties of $G'_k$ and the period of $G$.

\begin{lemma}\label{lemma-reachability}
    Let $v$ be a vertex of $G$ and $k$ be an integer number between $1$ and $|V|$. We have that~$k$ divides the period of $G$ if and only if there is no undirected path in $G'_k$ between $(v, 0)$ and $(v, i)$ for any~$i \ne 0$.
\end{lemma}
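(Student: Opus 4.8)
The plan is to reinterpret undirected paths in $G'_k$ as walks in $G$ that may traverse each edge in either direction, and to track a single integer invariant along such a walk, its \emph{signed length}. First I would record that every edge of $G'_k$ joins some $(u,i)$ to $(u',i')$ with $(u,u') \in E$ and $i' = i+1 \pmod k$; thus following one edge of $G'_k$ corresponds either to traversing an edge of $G$ forwards, changing the second coordinate by $+1$, or backwards, changing it by $-1$. Consequently a $(v,0)$--$(v,i)$ path in $G'_k$ is exactly a closed walk through $v$ in $G$, with edges allowed in both directions, whose signed length --- the number of forward steps minus the number of backward steps --- is congruent to $i$ modulo $k$. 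Letting $H \subseteq \ZZ_k$ denote the set of residues $i$ for which $(v,i)$ is reachable from $(v,0)$, I would then observe that $H$ is a subgroup of $\ZZ_k$: it contains $0$ via the empty walk, is closed under negation by reversing a walk, and is closed under addition by concatenating two closed walks through $v$. The statement to be proved is precisely that $H = \{0\}$ if and only if $k \mid p$, where $p$ is the period of $G$.

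The core of the argument is to identify $H$ with the cyclic subgroup $\langle p \bmod k\rangle$ of $\ZZ_k$ generated by $p$. I would obtain this by computing the subgroup $d\ZZ$ of all \emph{integer} signed lengths realised by closed walks through $v$ and showing $d = p$; reducing modulo $k$ then gives $H = \langle p \bmod k\rangle$. For $p \mid d$ I would apply \Cref{prop-partition}: a $p$-consistent partition yields a map $\varphi \colon V \to \ZZ_p$ with $\varphi(u') = \varphi(u) + 1 \pmod p$ for every edge $(u,u') \in E$. A forward step increases $\varphi$ by $1$ and a backward step decreases it by $1$, so along any closed walk through $v$ the signed length is congruent to $\varphi(v) - \varphi(v) = 0$ modulo $p$; hence $d\ZZ \subseteq p\ZZ$. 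For $d \mid p$ I would use that $p$ is the greatest common divisor of the cycle lengths of $G$, together with the fact that each cycle length is itself a realisable signed length: given a cycle $C$ of length $\ell$ and, by strong connectivity, a directed path $P$ from $v$ to a vertex of $C$, the walk ``$P$ forwards, once around $C$, then $P$ backwards'' is a closed walk through $v$ of signed length $\ell$. Thus $d \mid \ell$ for every cycle length $\ell$, whence $d \mid p$. Combining the two divisibilities gives $d = p$ and therefore $H = \langle p \bmod k\rangle$.

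It then remains to finish with an elementary computation in $\ZZ_k$: the subgroup $\langle p \bmod k\rangle$ is trivial exactly when $p \bmod k = 0$, i.e.\ when $k \mid p$. Since the absence of an undirected path between $(v,0)$ and $(v,i)$ for every $i \ne 0$ is precisely the condition $H = \{0\}$, this yields the desired equivalence, and the bound $1 \le k \le |V|$ plays no role in the argument. I expect the main obstacle to be the two-sided identification of $H$ with $\langle p \bmod k\rangle$, and within it the divisibility $d \mid p$: there one must manufacture closed walks of prescribed signed length through the \emph{fixed} vertex $v$, which is exactly where strong connectivity and the gcd characterisation of the period are both needed. The reverse direction $p \mid d$, by contrast, is a direct consequence of the consistent partition supplied by \Cref{prop-partition}.
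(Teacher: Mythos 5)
Your proof is correct, but it takes a genuinely different route from the paper's. The paper factors the equivalence through $k$-consistent partitions: it first shows, via \Cref{prop-partition} and the merging trick $S'_i = \bigcup_{j} S_{i+jk}$, that a $k$-consistent partition exists if and only if $k$ divides the period, and then, for the harder direction, explicitly constructs such a partition from the reachability classes of $(v,0)$ in $G'_k$ (your ``translation'' of paths appears there as the argument that the classes $S_j$ are well defined). You instead stay inside $\ZZ_k$: you identify the reachable residues as a subgroup $H$, compute the group $d\ZZ$ of integer signed lengths of closed walks through $v$, and pin down $d = p$ by playing \Cref{prop-partition} (used as a potential function $\varphi\colon V \to \ZZ_p$, giving $p \mid d$) against the gcd characterisation of the period (each cycle length $\ell$ is a realisable signed length via ``out along $P$, around $C$, back along $P$'', giving $d \mid p$). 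Your version never constructs the $k$-consistent partition, and it proves something strictly stronger than the lemma: the reachable residues are exactly $\langle p \bmod k\rangle = \langle \gcd(p,k) \bmod k\rangle$, which also substantiates your correct remark that the hypothesis $1 \le k \le |V|$ is irrelevant to the statement (it matters only for the algorithm's loop range). The paper's route, in exchange, exhibits the $k$-consistent partition explicitly, which is the combinatorial object the surrounding section is organised around; note also that both proofs ultimately invoke the same two ingredients (\Cref{prop-partition} and period $=$ gcd of cycle lengths), just packaged differently. Two cosmetic points: the correspondence is really between paths in $G'_k$ and closed \emph{walks} through $v$ in $G$ (which is harmless, since reachability by walks and by paths coincide), and in the degenerate case of a strongly connected digraph with no cycle (a single vertex without edges) the period is undefined and $d=0$, a case the paper's proof silently excludes as well.
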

\begin{proof}
It follows from \Cref{prop-partition} that a $k$-consistent partition of $V$ exists if and only if $k$ divides the period of $G$. Indeed, if such a partition exists, the length of every cycle of $G$ is clearly divisible by $k$, and thus~$k$ divides the period of $G$.
Conversely, if $k$ divides the period~$p$ and $S_0, \ldots, S_{p-1}$ is a $p$-consistent partition of $V$, then $S'_0, \ldots, S'_{k-1}$ with $S'_i = \bigcup_{j=0}^{\frac{p}{k}-1} S_{i+j k}$ is a $k$-consistent partition of~$V$.
It remains to show that a $k$-consistent partition exists if and only if $G'_k$ has no path between $(v, 0)$ and $(v, i)$ for any $i \ne 0$.

Concerning the forward implication, observe that every undirected path in $G'_k$ between vertices $(v, 0)$ and $(u, j)$ such that $u$ and $v$ are in the same set $S_h$ must have $j = 0$.
Indeed, by construction of~$G'_k$, going forward along an edge of $G$ adds one to the second component of the corresponding vertex of $G'_k$, and going backwards subtracts one. 
Thus, the existence of a $k$-consistent partition implies that for every vertex $v \in V$, there is no path in $G'_k$ between $(v, 0)$ and $(v, i)$ for any $i \ne 0$.

Conversely, suppose that there is no path in $G'_k$ between $(v, 0)$ and any $(v,i)$ with $i \ne 0$.
Then for any path in $G'_k$ between $(v, \ell)$ and $(v, \ell')$ we have $\ell = \ell'$, as any such path can be translated to a path between $(v, 0)$ and $(v, \ell'-\ell)$ by construction of $G'_k$.
It follows that if there is a path in $G'_k$ between $(v,0)$ and some vertex $(u,j)$, then there is no path between $(u,j)$ and any $(u,j')$ with $j' \ne j$: such a path could be extended to a path between $(v, \ell)$ and $(v, \ell+j'-j)$ for some~$\ell$.
Thus, the partition $S_0, \ldots, S_{k-1}$ of $V$, where $S_j$ contains those vertices $u$ such that $(u,j)$ is reachable from $(v,0)$ is $k$-consistent.
\end{proof}
 
To compute the period of $G$, fix an arbitrary vertex $v$ in $G$ and check the property of \Cref{lemma-reachability} for all values of $k$ from $1$ to $|V|$. The largest value $k$ such that there is no path in~$G'_k$ between $(v, 0)$ and $(v, i)$ for any $i \ne 0$ is the period of $G$. Since the existence of a path between two vertices in an undirected graph can be checked in \LL~\cite{Reingold2008}, and for any given $k$ the graph $G'_k$ can be constructed in~\LL, the described algorithm also works in \LL.

\subsection{\LL-hardness} 

It remains to prove \LL-hardness, which follows from the following lemma.

\begin{lemma}\label{lemma-sc-period-hard}
    Let $G$ be a  strongly connected digraph of period at most two. Deciding if the period of~$G$ is one is \LL-hard.
\end{lemma}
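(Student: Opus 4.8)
The plan is to establish \LL-hardness of deciding period one (for strongly connected digraphs of period at most two) by reducing from a problem known to be \LL-complete. The canonical \LL-complete problem for this purpose is undirected $st$-connectivity (USTCON), whose \LL-hardness is standard \cite{Reingold2008}. Since \Cref{lemma-reachability} already ties the period of a strongly connected digraph to undirected reachability in the auxiliary graph $G'_k$, it is natural to run that correspondence \emph{backwards}: encode an undirected connectivity instance into a strongly connected digraph whose period distinguishes exactly the yes- and no-cases.

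Concretely, I would take an undirected graph $H$ with two distinguished vertices $s,t$ and build a strongly connected digraph $G$ whose period is one precisely when $s$ and $t$ lie in the same connected component of $H$ (or the opposite, depending on the exact source problem; one can complement since the problem is symmetric under \LL-reductions). The idea is to make $G$ ``two-coloured'' by default---i.e.\ bipartite-like with period two---and arrange that a path from $s$ to $t$ in $H$ forces the existence of an odd cycle in $G$, collapsing the period to one. First I would orient each undirected edge $\{u,v\}$ of $H$ as a pair of directed edges to guarantee strong connectivity within a component, and attach a fixed two-cycle or a small even gadget to pin down period at most two in the negative case; then I would add a single ``parity-breaking'' edge linking $s$ and $t$ whose traversal, combined with an $s$--$t$ path, closes a cycle of odd length. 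This mirrors the odd-cycle construction used for \Cref{thm:nlc-one-mat}, but now the whole digraph must be strongly connected rather than merely almost strongly connected.

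The two correctness obligations are: (i) $G$ is always strongly connected and has period at most two; and (ii) $G$ has period one if and only if the underlying connectivity condition on $H$ holds. For (i), strong connectivity follows from bidirecting edges plus a spanning backbone, and the period being at most two follows because all cycles not using the parity-breaking edge can be arranged to have even length (a consistent $2$-colouring of $G$ minus that edge). For (ii), the forward direction uses that an odd cycle forces $\gcd$ of cycle lengths to be one; the reverse direction argues that if $s,t$ are \emph{not} connected appropriately then $G$ admits a $2$-consistent partition, so by \Cref{prop-partition} its period is exactly two. Finally, I would check that the whole construction is computable in \LL (indeed in \ACZ), so that composing with the reduction preserves hardness.

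The main obstacle I anticipate is simultaneously enforcing strong connectivity \emph{and} a clean parity structure: adding a backbone to make $G$ strongly connected can inadvertently create short odd cycles, prematurely forcing period one and destroying the reduction. The delicate step is therefore designing the connectivity gadgets so that every auxiliary cycle has even length---equivalently, so that $G$ minus the parity-breaking edge is properly $2$-colourable---while still linking all components. I expect to resolve this by subdividing edges as needed to control parities (as in the previous section) and by routing the backbone through a single designated ``colour class,'' so that the only possible source of an odd cycle is a genuine $s$--$t$ connection in $H$; verifying this parity bookkeeping rigorously will be where most of the real work lies.
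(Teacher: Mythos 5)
Your proposal has a genuine gap, and it sits exactly where you anticipated it: the backbone. The problem is not merely that connectivity gadgets might ``inadvertently'' create short odd cycles---it is that, in this style of reduction, they \emph{must}. Your reduction has to be instance-oblivious (computable in \ACZ or \LL without knowing the answer), so the backbone connects every vertex of $H$ to the rest of $G$ by paths whose lengths, and hence parities, are fixed by the construction alone. In the no-case, period two forces a $2$-consistent partition (\Cref{prop-partition}), which in a strongly connected digraph is unique up to swapping the two classes; the backbone therefore pins down the class of \emph{every} vertex relative to $s$, identically in yes- and no-instances. Meanwhile your subdivided, bidirected copy of $H$ forces all original vertices within one connected component of $H$ into the same class. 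Now you are stuck: for the parity-breaking edge between $s$ and $t$ to be harmless in the no-case, the backbone must place $s$ and $t$ in different classes; but then take a vertex $u$ and two different no-instances, one where $u$ lies in $s$'s component and one where it lies in $t$'s component---the backbone assigns $u$ a single fixed class, which conflicts with the $H$-forced class in one of the two instances, producing an odd cycle (period one) in a no-instance. Concretely: a hub connected to every vertex by paths of fixed parity either puts $s$ and $t$ in the same class, so the odd $s$--$t$ edge closes an odd cycle through the hub regardless of $H$'s connectivity, or creates exactly the conflict above. Routing ``through a single colour class'' and subdividing cannot escape this, because the obstruction is information-theoretic: deciding which side of the bipartition each vertex belongs to is precisely the connectivity question you are trying to reduce from.

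The paper sidesteps this by reducing not from USTCON but from the \textsc{ORD} problem (\LL-hard under \ACZ reductions \cite{Etessami1997}): the input is \emph{promised} to be a single directed path with known first and last vertices, so strong connectivity comes essentially for free by adding one closing edge plus three chords from the start vertex, and the only unknown is the relative order of $s$ and $t$ on the path. After subdividing every edge and rerouting the edge into $s$ through a fresh vertex $s'$, the resulting digraph has exactly four cycles, whose lengths are all controlled; the parity of the one cycle through the chord $(v,t)$ depends on whether it passes the odd-length detour at $s'$, i.e., on whether $t$ precedes $s$. Your instinct to run \Cref{lemma-reachability} backwards and encode the answer in an odd cycle is sound, and your source problem is legitimately \LL-complete, but to complete a proof along your lines you would need a source problem with enough promised structure that strong connectivity does not have to be manufactured obliviously---which is exactly what \textsc{ORD} provides.
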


\begin{proof}
    We provide an \ACZ reduction from the \textsc{ORD} problem, which is \LL-hard under an \ACZ reduction \cite{Etessami1997}  (see \cite{Immerman1995} for a hierarchy of reductions). The input of the \textsc{ORD} problem is a directed path $G = (V, E)$ and two vertices $s, t \in V$ on this path. The path is defined by its successor relation $S(x, y)$ such that $S(x, y)$ is true if and only if $(x, y) \in E$.  Intuitively, this means that $G$ is provided as an unsorted list of edges and is promised to consist of a single directed path. The \textsc{ORD} problem then asks if $s$ precedes $t$ in the unique total ordering consistent with $S$.

    The reduction goes as follows. Let $v$ be the first vertex on the directed path $G$, and $v'$ be the last vertex on this path. We can assume that $v'$ is different from both $s$ and $t$. We substitute every edge of~$G$ with a path of length two by introducing new vertices, and we also add two additional new vertices $s'$ and $v'$. We now remove the edge $e$ going to $s$, and add an edge from the beginning of $e$ to~$s'$, and we also add the edge $(s', s)$. We also add the edge $(v', v)$. Finally, we add three edges $(v, s)$, $(v, t)$ and $(v, v')$.  Denote the thus obtained digraph as $G' = (V', E')$. The construction is illustrated~in~\Cref{fig:period-l-hard}.

    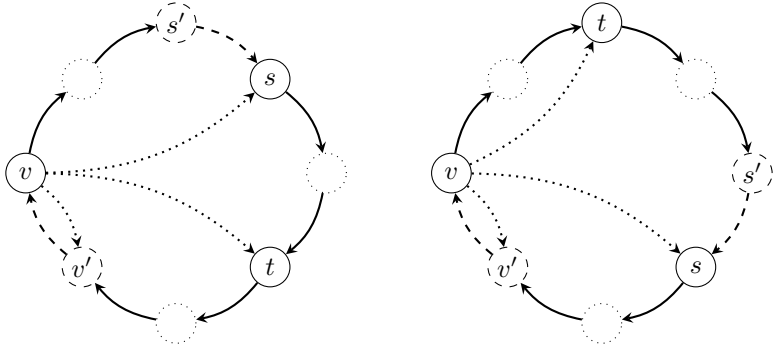
\begin{figure}[ht]\centering
{\begin{tikzpicture} [node distance = 2cm]
\tikzset{every state/.style={inner sep=1pt,minimum size=1.5em}}

\node [state] at (0, 0) (1) {$v$};
\node [state, dotted] at (0.75, 1.25) (2) {};
\node [state, dashed] at (2, 2) (3) {$s'$};
\node [state] at (3.25, 1.25) (4) {$s$};
\node [state, dotted] at (4, 0) (5) {};
\node [state] at (3.25, -1.25) (6) {$t$};
\node [state, dotted] at (2, -2) (7) {};
\node [state, dashed] at (0.75, -1.25) (8) {$v'$};

\path [-stealth, thick]
(1) edge [bend left=20] node[above] {} (2)
(2) edge [bend left=20] node[above] {} (3)
(3) edge [bend left=20, dashed] node[above] {} (4)
(4) edge [bend left=20] node[above] {} (5)
(5) edge [bend left=20] node[above] {} (6)
(6) edge [bend left=20] node[above] {} (7)
(7) edge [bend left=20] node[above] {} (8)
(8) edge [bend left=20, dashed] node[above] {} (1)

(1) edge [bend left=20, dotted] node[above] {} (8)

(1) edge [bend right=20, dotted] node[above] {} (4)

(1) edge [bend left=20, dotted] node[above] {} (6)

;

\end{tikzpicture}}\hspace{1cm}
{\begin{tikzpicture} [node distance = 2cm]
\tikzset{every state/.style={inner sep=1pt,minimum size=1.5em}}

\node [state] at (0, 0) (1) {$v$};
\node [state, dotted] at (0.75, 1.25) (2) {};
\node [state] at (2, 2) (3) {$t$};
\node [state, dotted] at (3.25, 1.25) (4) {};
\node [state, dashed] at (4, 0) (5) {$s'$};
\node [state] at (3.25, -1.25) (6) {$s$};
\node [state, dotted] at (2, -2) (7) {};
\node [state, dashed] at (0.75, -1.25) (8) {$v'$};

\path [-stealth, thick]

(1) edge [bend left=20] node[above] {} (2)
(2) edge [bend left=20] node[above] {} (3)
(3) edge [bend left=20] node[above] {} (4)
(4) edge [bend left=20] node[above] {} (5)
(5) edge [bend left=20, dashed] node[above] {} (6)
(6) edge [bend left=20] node[above] {} (7)
(7) edge [bend left=20] node[above] {} (8)
(8) edge [bend left=20, dashed] node[above] {} (1)

(1) edge [bend left=20, dotted] node[above] {} (8)

(1) edge [bend right=20, dotted] node[above] {} (3)
(1) edge [bend left=20, dotted] node[above] {} (6)

;
\end{tikzpicture}}
\caption{Digraphs obtained from the directed paths $v \to s \to t \to v'$ (left) and $v \to t \to s \to v'$ (right). Dotted vertices represent the result of subdividing edges, dashed vertices are the additional new vertex $s'$ and the end of the path $v'$. The three edges added at the end are dotted, and dashed edges are the ``special'' edges going from $s'$ and~$v'$. Depending on the order of $s$ and $t$, these ``special'' edges are included in different cycles and thus change the parity of the period.}\label{fig:period-l-hard}
\end{figure}

We now argue that the period of $G'$ is one if and only if $t$ comes before $s$ in $G$, otherwise the period of $G'$ is two.
It is easy to see that $G'$ has exactly four cycles, and all of them contain vertex $v$. Hence, these four cycles correspond to the four edges outgoing from $v$. Moreover, the period of $G'$ is at most two since it contains a cycle of length two consisting of the vertices $v$ and $v'$. The cycle containing all vertices of $G'$ always has even length, since it consists of pairs of edges obtained from subdividing the edges of $G$ together with two additional edges $(s', s)$ and $(v', v)$. Similarly, the cycle containing the edge $(v, s)$ always has even length, since it consists of subdivided edges together with $(v, s)$ and~$(v', v)$. The only difference is the parity of the length of the cycle containing the edge $(v, t)$: if $t$ comes before~$s$ in $G$, then this cycle consists of subdivided edges together with $(v, t)$, $(s', s)$ and $(v', v)$, and hence its length is odd. If $t$ comes after $s$, then this cycle does not contain $(s', s)$, and by a similar argument its length is even.
\end{proof}

We remark that, intuitively, the main reason why the complexity goes from \LL-complete for strongly connected digraphs to \NL-complete for almost strongly connected digraphs is that \Cref{prop-partition} is no longer true for almost strongly connected digraphs, making this case general enough to express reachability in digraphs.
Indeed, the digraph with vertices $1, 2, 3$ and edges $(1,2), (1,3), (2,3), (3,2)$ has period~$2$ and no $2$-consistent partition.

\section{Exponent of primitivity of a digraph}\label{sec:exponent}
We conclude by studying the complexity of computing the index of convergence of a digraph. We show that it is \NL-complete by proving the following result.

\begin{theorem} \label{thm-index-of-convergence}
    The index of convergence of a digraph is computable in \NL. The problem of deciding if the exponent (that is, the index of convergence) of a primitive digraph with $n$ vertices is at most~$\frac{n}{2}$ is \NL-complete. 
\end{theorem}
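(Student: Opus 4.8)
The plan is to prove \Cref{thm-index-of-convergence} in two parts: first the \NL upper bound for the index of convergence of an arbitrary digraph, then \NL-hardness for the specific decision problem about the exponent being at most $\frac{n}{2}$. For the upper bound, I would exploit the same characterisation already used in the proof of \Cref{thm:nlc-one-mat}: the index of convergence $m$ of a digraph $G$ with $n$ vertices is at most $n^2$ (by Schwarz \cite{Schwarz1970}), and once we know the period $p$ (computable in \NL by \Cref{thm:nlc-one-mat}), the index is the smallest $m$ such that $A^m = A^{m+p}$, equivalently such that for all $m' \ge m$ we have $A^{m'} = A^{m'+p}$. Concretely, I would compute $p$ first, then search for the smallest $m$ in the range $0$ to $n^2$ such that no pair of vertices $(u,v)$ witnesses a discrepancy between reachability in exactly $m'$ steps and in exactly $m'+p$ steps for any $m' \ge m$ up to $n^2 + p$; each such check is a guess-and-verify of paths of prescribed lengths, which is in \NL. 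Since $m$ is polynomially bounded it can be output in binary (or unary) within the space budget, and composition of \NL-functions keeps everything in \NL.

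For the hardness part, the natural approach is again a reduction from an \NL-complete reachability problem, reusing the machinery of \Cref{lem:st-reach-ass}, and engineering a primitive digraph whose exponent is forced to straddle the threshold $\frac{n}{2}$ depending on whether the source reaches the target. The idea is to take an instance of $(s,t)$-reachability (under the structural assumptions of \Cref{lem:st-reach-ass}) and build a primitive digraph $G'$ on $n$ vertices together with a designated pair of vertices whose shortest ``late'' connection — the length after which all longer path-length sets stabilise — is large (close to $n$, hence above $\frac{n}{2}$) exactly when the reachability instance is a no-instance, and small (at most $\frac{n}{2}$) when it is a yes-instance. I would make the digraph primitive by adding short cycles of coprime lengths (e.g. a self-loop somewhere, or two cycles of lengths $2$ and $3$) so that the period is guaranteed to be one, while arranging a long ``bottleneck'' path whose presence or absence is controlled by whether $s$ reaches $t$; this bottleneck governs the last pair of vertices to achieve a path of every sufficiently large length, and hence the exponent.

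The main obstacle I anticipate is the hardness construction: controlling the exponent precisely is considerably more delicate than controlling the period, because the exponent is a global maximum over all ordered pairs of the threshold past which paths of all lengths exist, and local gadget modifications can have hard-to-predict effects on this maximum. The key technical lemma will be a clean characterisation of how the exponent of the gadget depends on the reachability answer, ideally reducing it to the length of a specific structurally forced shortest path (or the length of a longest ``gap'' in the achievable path-lengths between a fixed pair of vertices). I would likely anchor the analysis on a single critical pair $(u,v)$ in $G'$ and show that the earliest $m$ with a $(u,v)$-path of every length $\ge m$ is exactly the quantity that crosses $\frac{n}{2}$; the subdivision trick used in \Cref{thm:nlc-one-mat} (inserting vertices to control parities and lengths) is the most promising tool, combined with attaching short coprime cycles at the right vertex to guarantee primitivity without disturbing the critical pair's stabilisation length.

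The routine parts — verifying that $G'$ is strongly connected and of period one (hence primitive), that $n$ and the threshold $\frac{n}{2}$ come out correctly, and that the whole reduction is computable in \ACZ or \LL — I would defer to direct checking once the gadget is fixed. Establishing that the exponent upper bound $n^2$ permits a purely \NL search, and that combining the period computation with the subsequent discrepancy search stays in \NL via \coNL$=\NL$, completes the membership direction; the emphasis of the write-up should fall on making the hardness gadget's exponent provably threshold-sensitive.
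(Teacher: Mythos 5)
Your membership argument has a genuine gap. You propose to compute the period $p$ of the whole digraph via \Cref{thm:nlc-one-mat} and then find the least $m\le n^2$ with $A^{m'}=A^{m'+p}$ for all $m'\ge m$ by guess-and-verify of paths of the prescribed lengths. But for a general digraph $p$ is the least common multiple of the periods of the strongly connected components and can be superpolynomial in $n$ (of order $2^{\Theta(\sqrt{n\log n})}$, by Landau's function) --- which is exactly why the paper insists the period be output in binary. Guessing a path of length $m'+p$ step by step requires a counter of $\Theta(\log p)=\Theta(\sqrt{n\log n})$ bits, exceeding logarithmic space, and exact-length reachability with a binary-encoded length is not available in \NL off the shelf. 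Your scheme is sound only when the digraph is strongly connected, where $p\le n$ --- but that is the algorithm already given in Section 2, not the claim of this theorem. The missing idea is precisely the paper's \Cref{lemma-exponent-in-NL}: ``for all large enough $i$ there is an $(s,t)$-path of length $K+i\cdot P$'' holds if and only if there exist a single strongly connected component $C$, whose period $p_C\le n$ is computable in \LL by \Cref{thm-sc-period}, and an $(s,t)$-path through $C$ of length $\ell\equiv K \pmod{p_C}$; such a path, if one exists, can be taken of length at most $2|V|p_C$. This substitutes the polynomially bounded local period for the global one, so that every guessed length fits in logarithmic space; without it the membership direction does not go through.

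The hardness half of your proposal is the right approach in outline --- reduce from $(s,t)$-reachability under the assumptions of \Cref{lem:st-reach-ass}, force period one with a self-loop or short coprime cycles, and install a long bottleneck so that the exponent crosses $\frac{n}{2}$ exactly according to the reachability answer --- and this matches the paper's construction in spirit. But you leave the gadget and its ``key technical lemma'' open, and the obstacle you correctly anticipate (the exponent is a maximum over \emph{all} ordered pairs, so local modifications have global effects) is exactly what needs a concrete device. The paper resolves it by adding an edge from \emph{every} vertex to $s$ and from $t$ to every vertex (including the self-loop $(t,t)$), plus a fresh $(t',t)$-path on $|V|$ new vertices of length $|V|+1$: the universal edges uniformize all pairs, so that if $s$ reaches $t$ in $G$ then every pair is joined by a path of length exactly $|V|=\frac{n}{2}$ (padding with the self-loop), while if not, the critical pair $(s,t)$ is forced through the bottleneck and needs length at least $\frac{n}{2}+1$. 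Your sketch as written does not yet supply this uniformization, so the analysis of the maximum over pairs remains unproven.
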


\subsection{\NL-algorithm}

Let $G = (V, E)$ be a digraph.
Denote by~$P$ its period, that is, the least common multiple of the periods of the strongly connected components.
We want to show the first statement of \Cref{thm-index-of-convergence}, i.e., that the index of convergence of~$G$ can be computed in~\NL. As proved in \cite{Schwarz1970} (see also \cite{Jia1987,Heap1966}), the index of convergence is at most $|V|^2$. Let $0 \le K < |V|^2$. It  suffices to show that one can determine in~\NL if  the index of convergence is greater than $K$.
This condition is satisfied if and only if there are vertices $s, t$ such that either
\begin{enumerate}[(i)]
    \item there is no
$(s, t)$-path of length~$K$ but for all large enough~$i$ there is an $(s, t)$-path of length $K + i \cdot P$; or
\item there is an $(s, t)$-path of length~$K$ but for infinitely many (in fact, almost all)~$i$ there is no $(s, t)$-path of length $K + i \cdot P$.
\end{enumerate}

To check this in~\NL, we non-deterministically guess vertices $s, t$.
The existence of an $(s, t)$-path of length~$K$ can easily be determined in~\NL. It remains to show how to determine in~\NL whether for all large enough~$i$ there exists an $(s, t)$-path of length $K + i \cdot P$. To do that, we use the following lemma.

\begin{lemma}\label{lemma-exponent-in-NL}
    Let $s, t$ be two vertices, and let $K \ge 0$ be an integer.
    The following statements are equivalent.
    \begin{enumerate}[(1)]
        \item For all large enough integers $i \ge 0$ there exists an $(s, t)$-path of length $K + i \cdot P$.
        
        \item There exists a strongly connected component~$C$ and an $(s, t)$-path through~$C$ of length $\ell$ with  $\ell = K~(\modu p)$,
where $p$ is the period of~$C$.
    \end{enumerate}
\end{lemma}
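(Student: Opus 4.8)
The plan is to prove the two implications separately, with the key technical tool being the fact that paths through a fixed strongly connected component~$C$ have lengths that, modulo the period $p$ of~$C$, form a single residue class once the paths are long enough, and that all sufficiently large lengths in that class are achievable.

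\textbf{From (2) to (1).} First I would fix a strongly connected component~$C$ witnessing~(2), together with an $(s,t)$-path through~$C$ of some length $\ell_0$ with $\ell_0 \equiv K \pmod{p}$, where $p$ is the period of~$C$. The idea is that once the path enters~$C$ at some vertex~$c$, I can pad its length by inserting detours around cycles inside~$C$. Since $C$ is strongly connected of period~$p$, there is a number~$N$ such that for every $m \ge N$ divisible by~$p$ there is a closed walk at~$c$ of length exactly~$m$ (this is a standard numerical-semigroup fact: the cycle lengths of~$C$ have gcd equal to~$p$, so every sufficiently large multiple of~$p$ is a nonnegative integer combination of them). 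Inserting such a closed walk into the fixed path at~$c$ produces $(s,t)$-paths of every length $\ell_0 + m$ with $m \ge N$ and $p \mid m$. Because $p \mid P$ (as $P$ is the lcm of all component periods) and $\ell_0 \equiv K \pmod p$, the lengths $K + i\cdot P$ all lie in the arithmetic progression $\ell_0 + p\ZZ$; hence for all large enough~$i$ the length $K + i\cdot P$ is realised. This gives~(1).

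\textbf{From (1) to (2).} For the converse I would argue contrapositively or directly by a pigeonhole/decomposition argument. Suppose~(1) holds, so there are infinitely many~$i$ with an $(s,t)$-path of length $K + i\cdot P$, and these lengths grow without bound. Any $(s,t)$-path decomposes as an alternation of bridge-edges between components and walks inside components; the ``bridge'' part has bounded total length (at most $|V|$, since the component order along any path is acyclic), so for long paths almost all of the length is accumulated inside the strongly connected components the path traverses. Since there are only finitely many components, by pigeonhole some single component~$C$ carries an unbounded amount of the length across infinitely many of these paths. Restricting attention to that~$C$ and the fixed entry/exit vertices of the path within it, the length contributed inside~$C$ is a closed-walk length at a vertex of~$C$ plus bounded correction, and every such closed-walk length is $\equiv 0 \pmod{p}$ where $p$ is the period of~$C$. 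Carefully tracking the bounded bridge contributions modulo~$p$, I would conclude that there is an $(s,t)$-path through~$C$ whose \emph{total} length~$\ell$ satisfies $\ell \equiv K \pmod p$, which is exactly~(2).

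\textbf{Main obstacle.} I expect the delicate part to be the direction (1)$\Rightarrow$(2): making the pigeonhole-and-decomposition argument rigorous requires controlling how the length is distributed among the components traversed and handling the bounded bridge-edge contributions and entry/exit offsets modulo~$p$, so that the residue $K \bmod p$ is matched by a single concrete path rather than merely asymptotically. The direction (2)$\Rightarrow$(1) is comparatively routine once the numerical-semigroup fact about period-$p$ strongly connected components is invoked; its only subtlety is justifying that all sufficiently large multiples of~$p$ are attainable as closed-walk lengths, which follows from strong connectivity together with $\gcd$ of cycle lengths being~$p$.
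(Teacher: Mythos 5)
Your (2) $\Rightarrow$ (1) direction is correct and is essentially the paper's argument: fix a vertex $c \in C$ on the witness path, use the fact that every sufficiently large multiple of $p$ occurs as the length of a closed walk at $c$ (the standard gcd/numerical-semigroup fact you cite), splice these closed walks into the path to realise all lengths $K + i \cdot p$ for large $i$, and pass to the subprogression $K + i \cdot P$ using $p \mid P$. No issues there.

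The (1) $\Rightarrow$ (2) direction is where you diverge from the paper, and your proposed route has a gap at exactly the step you flag as the main obstacle --- a gap that dissolves once you notice the observation the paper's proof rests on. Any $(s,t)$-path witnessing (1) has total length $\ell = K + i \cdot P$, so $\ell \equiv K \pmod{P}$ holds \emph{by fiat}; and since $p$ divides $P$ for the period $p$ of every strongly connected component, $\ell \equiv K \pmod{p}$ is automatic for every component the path passes through. The only thing left to check is that the path passes through a component that actually has a period, i.e., contains a cycle: take $i$ large enough that $K + i \cdot P > |V|$, so the path repeats a vertex and hence traverses such a component. That is the paper's entire proof of this direction --- two sentences, no decomposition. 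By contrast, your plan to recover the residue $K \bmod p$ by decomposing the path into bridges and within-component walks and ``carefully tracking the bounded bridge contributions modulo $p$'' would not go through as stated: the bridge edges have bounded total length, but the sub-walks inside the \emph{other} components the path visits can be arbitrarily long, and their lengths are controlled only modulo their own periods, not modulo $p$, so the pieces do not determine $\ell \bmod p$. The only handle on $\ell \bmod p$ is the global constraint $\ell = K + i \cdot P$ --- and once you invoke that, the pigeonhole argument, the unbounded-length component, and the entry/exit bookkeeping all become unnecessary.
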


\begin{proof}
``(1) $\Longrightarrow$ (2)''.
For large enough~$i$, any $(s, t)$-path of length $K + i \cdot P$ passes through some strongly connected component $C$ of some period $p$. If the length of such path is $\ell$ and $\ell = K~(\modu P)$, then $\ell = K~(\modu p)$ since $p$ divides $P$.

``(2) $\Longrightarrow$ (1)''.
Let $c \in C$ be a vertex on an $(s, t)$-path~$w$ of length $\ell$ with  $\ell = K~(\modu p)$, where~$p$ is the period of~$C$.
From the definition of~$p$, for all large enough~$i$ there is a cycle of length $i \cdot p$ containing~$c$. 
Conjoining~$w$ with these cycles yields $(s, t)$-paths of lengths $K + i \cdot p$ for all large enough~$i$.
Since $p$ divides $P$, statement~(1) follows.
\end{proof}

Now, we first non-deterministically guess a strongly connected component~$C$ and compute its period~$p$ in~\LL (using \Cref{thm-sc-period}). Using \Cref{lemma-exponent-in-NL}, we then check in~\NL the existence of an $(s, t)$-path through~$C$ of length $\ell$ with  $\ell = K~(\modu p)$: if such a path exists, there must be one of length at most~$2 |V| p$.
Thus, we get that one can compute the index of convergence in~\NL.

\subsection{\NL-hardness}\label{subsec-exp-hardness}

To prove \NL-hardness, we reduce from the $(s, t)$-reachability problem in digraphs with assumptions as in \Cref{lem:st-reach-ass} (including those on the special vertices $s',t'$). Let $G = (V, E)$ be such a digraph. 
Construct a digraph $G' = (V', E')$ by adding vertices and edges to $G$ as follows. Take $V' = V \cup U$, where $U = \{u_1, \ldots, u_{|V|}\}$ is a set of $|V|$ fresh vertices. Thus, $|V'| = 2|V|$. Take $E'$ to be the set $E$ together with an edge from every vertex in $V'$ to $s$, an edge from $t$ to every vertex in $V'$, and the edges $(t',u_1), (u_i, u_{i+1})$ for all $1 \le i \le |V|-1$, and $(u_{|V|},t)$, which form a $(t',t)$-path via~$U$. This construction is illustrated in \Cref{fig:exponent}.

\begin{figure}[h]\centering
\begin{subfigure}[T]{0.30\textwidth}{\begin{tikzpicture} [node distance = 2cm]
\tikzset{every state/.style={inner sep=1pt,minimum size=1.5em}}

\node [state] at (2, 0) (s) {$s$};
\node [state] at (4, 0) (tp) {$t'$};
\node [state] at (6, 0) (t) {$t$};
\node [state] at (4, 2) (sp) {$s'$};

\path [-stealth, thick]

(s) edge [] node[above] {} (tp)

(sp) edge [] node[above] {} (tp)
(sp) edge [] node[above] {} (t)

;
\end{tikzpicture}}
\end{subfigure}
\hspace{0.9cm} 
\begin{subfigure}[T]{0.30\textwidth}
\begin{tikzpicture} [node distance = 2cm]
\tikzset{every state/.style={inner sep=1pt,minimum size=1.5em}}

\node [state] at (2, 0) (s) {$s$};
\node [state] at (4, 0) (tp) {$t'$};
\node [state] at (6, 0) (t) {$t$};
\node [state] at (4, 2) (sp) {$s'$};

\node [state] at (2, -2) (u1) {$u_1$};
\node [] at (4, -2) (u2) {$...$};
\node [state] at (6, -2) (u3) {$u_4$};

\path [-stealth, thick]

(s) edge [] node[above] {} (tp)

(sp) edge [] node[above] {} (tp)
(sp) edge [] node[above] {} (t)

(tp) edge [] node[above] {} (u1)
(u1) edge [] node[above] {} (u2)
(u2) edge [] node[above] {} (u3)
(u3) edge [] node[above] {} (t)

(sp) edge [dotted] node[above] {} (s)
(tp) edge [dotted, bend right=20] node[above] {} (s)
(t) edge [dotted, bend right=30] node[above] {} (s)
(u1) edge [dotted] node[above] {} (s)
(u2) edge [dotted] node[above] {} (s)
(u3) edge [dotted] node[above] {} (s)

(t) edge [dashed] node[above] {} (tp)
(t) edge [dashed] node[above] {} (u1)
(t) edge [dashed] node[above] {} (u2)
(t) edge [dashed, bend left=20] node[above] {} (u3)
(t) edge [dashed, bend right=20] node[above] {} (sp)

(s) edge [dotted, loop left] node[above] {} (s)
(t) edge [dashed, loop right] node[above] {} (t)

;
\end{tikzpicture}
\end{subfigure}
\caption{The original digraph $G$ (left) and the digraph $G'$ obtained in the reduction in \Cref{subsec-exp-hardness} (right).}\label{fig:exponent}
\end{figure}
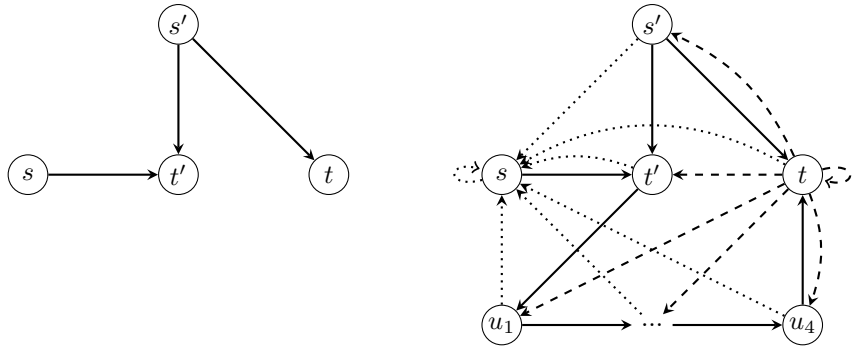

We start by showing that $G'$ is primitive. First, observe that $G'$ is strongly connected. Indeed, there is an edge in $G'$ from every vertex to $s$, there is a path in the original digraph $G$ (and thus in~$G'$) from every vertex to either $t$ or $t'$ by our assumptions, there is a path in $G'$ from $t'$ to $t$, and there is an edge in $G'$ from $t$ to every vertex of $G'$. Moreover, the period of $G'$ is one since it contains the edge~$(t, t)$. Hence, $G'$ is primitive.

Now, assume that there is an $(s, t)$-path in $G$. Then there is an $(s, t)$-path of length at most $|V| - 2$, since $t'$ does not have any outgoing edges in $G$, and hence this path cannot contain $t'$. Since in $G'$ there is an edge from every vertex to $s$, an edge $(t, t)$, and an edge from $t$ to every vertex, there is thus a path of length exactly $|V| = \frac{1}{2}|V'|$ between every two vertices of $G'$. Hence, the exponent of $G'$ is at most $\frac{1}{2}|V'|$.

Conversely, assume now that there is no $(s, t)$-path in $G$.
Then every $(s, t)$-path in~$G'$ must go through the $(t', t)$-path of length $|V|+1$ containing all vertices from $U$. Hence, the length of every $(s, t)$-path is at least $\frac{1}{2}|V'| + 1$, so the exponent of $G'$ is also at least $\frac{1}{2}|V'| + 1$.

\section{Conclusions}

We have shown that the period of a digraph can be computed in \NL, and distinguishing between periods one and two is already \NL-hard for almost strongly connected digraphs.
For strongly connected digraphs, the complexity drops to \LL-completeness.
We have shown that the index of convergence is computable in~\NL, complemented by an \NL-hardness result even for computing the exponent of primitive digraphs.
Thus, the computational complexity of the mentioned problems is very clear.
The literature contains linear-time algorithms for computing the period.
One might ask whether the exponent or index of convergence can similarly be computed in linear time.

\subsection*{Acknowledgements}
Andrew Ryzhikov is supported by the European Research Council (ERC) under the European Union’s Horizon 2020 research and innovation programme (Grant agreement No. 852769, ARiAT).

\bibliographystyle{alpha}
\bibliography{sync}

\end{document}